\pdfoutput=1
\RequirePackage[T1]{fontenc} %
\documentclass[letterpaper,10pt,conference]{ieeeconf}
\IEEEoverridecommandlockouts
\usepackage{amsmath,amssymb,amsthm,mathtools}

\usepackage{graphicx}
\usepackage{booktabs}
\usepackage{needspace} %
\graphicspath{{figures/}}
\usepackage[draft]{hyperref} %

\newtheorem{theorem}{Theorem}
\newtheorem{lemma}{Lemma}
\newtheorem{proposition}{Proposition}
\newtheorem{corollary}{Corollary}
\theoremstyle{definition}
\newtheorem{definition}{Definition}
\newtheorem{assumption}{Assumption}
\newtheorem{remark}{Remark}

\newcommand{\R}{\mathbb{R}}
\newcommand{\dd}{\,\mathrm{d}}
\newcommand{\md}[1]{\mathcal{M}_{#1}}
\newcommand{\Jint}{J_{\mathrm{int}}}
\newcommand{\Jesc}{J_{\mathrm{esc}}}
\newcommand{\vmax}{v_{\max}}
\newcommand{\phimax}{\varphi_{\max}}
\newcommand{\eps}{\varepsilon}
\newcommand{\cFanDrive}{1.5}
\newcommand{\cFanCapacityTime}{0.1}
\newcommand{\cFanAltFailureTime}{0.05}

\newcommand{\cJint}{0.5039}
\newcommand{\cJconstruct}{0.511831}
\newcommand{\cCertDrives}{0.28415,\ 2,\ 0.28215}
\newcommand{\cCertDurations}{1.6962,\ 0.49971,\ 2.5298}
\newcommand{\cCertTw}{4.72571}

\newcommand{\cCertGapPercent}{1.57}

\newcommand{\cPersistRadius}{0.051}
\newcommand{\cPersistBound}{0.0473}

\newcommand{\cHorizon}{2}
\newcommand{\cDzero}{9.03\times 10^{-5}}

\newcommand{\cThetaSum}{0.1}

\newcommand{\cNsamp}{100}

\newcommand{\cEpsH}{10^{-6}}

\title{Restorative Control and the Limits\\ of Safe Model Discrimination}

\author{Micheal Myers Jr.%
\thanks{%
This material is based upon work supported by the U.S. National Science Foundation CISE Graduate Fellowships under Grant No. 2313998.
Any opinions, findings and conclusions or recommendations expressed in this material are those of the author and do not necessarily reflect the views of the U.S. National Science Foundation.
The author is with the School of Computational Science and Engineering, Georgia Institute of Technology, Atlanta, GA. Email \texttt{mmyers75@gatech.edu}.
Codex and Claude \cite{openai-codex,anthropic-claude-code} assisted with manuscript revisions, proof checking, and numerical code. The author is responsible for all results.
Code and numerical results are available at \url{github.com/michealrm/safe-restoration}.}}

\AtBeginDocument{%
\typeout{PAPERLAYOUT top=\the\dimexpr1in+\topmargin+\headheight+\headsep\relax}%
\typeout{PAPERLAYOUT height=\the\textheight}%
\typeout{PAPERLAYOUT left=\the\dimexpr1in+\oddsidemargin\relax}%
\typeout{PAPERLAYOUT column=\the\columnwidth}%
\typeout{PAPERLAYOUT gap=\the\columnsep}%
\typeout{PAPERLAYOUT baseline=\the\baselineskip}}

\begin{document}
\bstctlcite{BSTcontrol}
\nocite{clark-destiny,haehnel-cml,kantarjian-enestnd}
\maketitle

\begin{abstract}
Restorative control seeks lasting recovery through an intervention that ends in finite time. We formulate this objective as a reach-avoid problem in which the input must bring the system safely into a desired basin of attraction before withdrawal. Two candidates share bistable baseline dynamics but accumulate irreversible damage through different allocations of drive-generated exposure. We derive a lower bound on the exposure required for basin transfer and a sufficient construction for safe restoration. When the lower bound exceeds the summed tolerances, every common open-loop input that restores either model causes failure under at least one. With sufficient tolerance, proportional allocation restores both without identification. For feedback from output samples with Gaussian noise, we bound safe-restoration probability in terms of failure risk under the alternative model. A numerical example illustrates the certificates under two sensing regimes.

\end{abstract}

\section{Introduction}\label{sec:intro}

Restorative control aims to recover a desired operating regime that persists after a finite intervention ends. One example is treatment-free remission in chronic myeloid leukemia (CML), in which disease control persists after tyrosine kinase inhibitor treatment ends. In the DESTINY clinical study, patients reduced their dose before stopping treatment \cite{clark-destiny}. A CML model by H\"ahnel et al. represents successful withdrawal as transfer into a remission basin. Simulations based on retrospective fits show how responses to dose reduction can help predict withdrawal outcomes \cite{haehnel-cml}. Treatment before withdrawal also carries a risk of harm. In the ENESTnd randomized CML trial, cardiovascular events were reported over ten years in 16.5\% and 23.5\% of patients in the two nilotinib dose groups, compared with 3.6\% with imatinib \cite{kantarjian-enestnd}. Together, these studies motivate studying lasting recovery, treatment harm and learning.

We study the underlying control question through a stylized model of irreversible damage and uncertain routing. The controller applies a drive $v$, such as dose intensity, and splits it between two routes, such as two inhibitors. Each of two candidate models directs the resulting damage to a different vulnerable component, such as an organ, with finite tolerance. The construction asks when uncertainty about the safe route prevents lasting recovery, even though each known candidate can be restored.

This is a reach-avoid task that brings the system into a recovery basin while keeping intervention-induced damage below its tolerance. For withdrawal time $T_w$ and desired operating set $H$, the objective is
\begin{equation}\label{eq:restorative-spec}
v(t)=0\quad \forall t>T_w,
\qquad y(T_w)\in\mathcal A(H).
\end{equation}
Here, $\mathcal A(H)$ is the basin of attraction of $H$ for the applicable candidate and damage remaining at withdrawal. This damage fixes the residual drag and shifted separator.

The candidates share bistable baseline output dynamics. Drive produces exposure, allocated as damage $w_1$ through fraction $p$ under $\md1$, or as $w_2$ through $1-p$ under $\md2$. Failure occurs at tolerance $\Theta_i$. Figure~\ref{fig:mechanism} summarizes these relations. The contributions of this paper are:

\begin{figure*}[t]
\centering
\includegraphics[width=\textwidth]{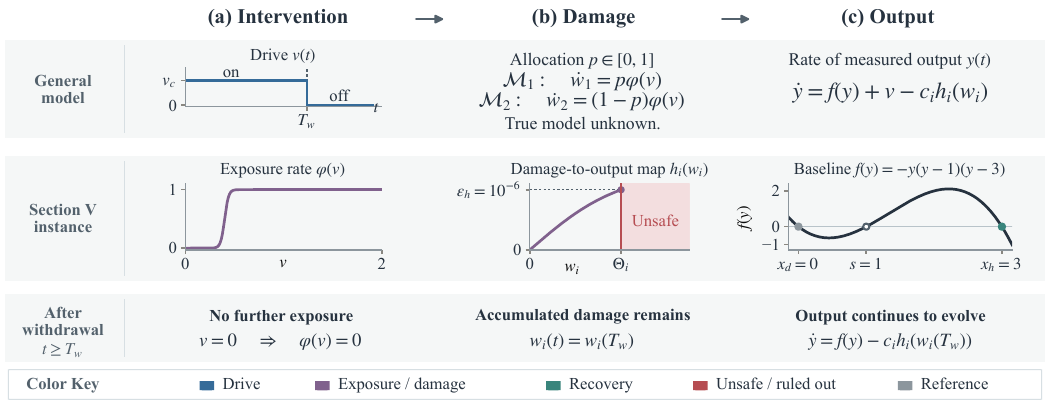}
\caption{Drive promotes recovery but also generates damage. (a) Exposure accumulates at rate $\varphi(v)$. The drive trace is schematic. Withdrawal sets this rate to zero. (b) Allocation $p=0$ prevents damage under $\md1$, while $p=1$ prevents it under $\md2$, so the zero-damage allocation depends on the unknown candidate. Withdrawal stops further accumulation but leaves existing damage. (c) Damage affects measured output through the drag $c_ih_i$. Since $h_i\le\eps_h$ even at tolerance $\Theta_i$, damage close to failure can have little effect on output. Residual drag shifts the separator after withdrawal. Safe withdrawal above $s+a$ leaves the output in the recovery basin by Lemma~\ref{lem:persistence}. The middle row shows the Section~\ref{sec:instance} maps. The color key applies to both figures.}
\label{fig:mechanism}
\end{figure*}

\begin{itemize}
\item A nonlinear exposure certificate shows that every basin crossing requires at least $\Jint$ exposure, even if output pauses or reverses (Lemma~\ref{lem:min-exposure}). If this exceeds total tolerance, no common safe open-loop input restores either candidate (Theorem~\ref{thm:impossibility}).
\item A sufficient restoration certificate uses proportional allocation to restore both candidates without identification when total tolerance exceeds suitable comparison-drive exposure $\bar J$ (Proposition~\ref{prop:capacity}).
\item Theorem~\ref{thm:risk-exchange} bounds sampled-feedback safe-restoration probability under one candidate by alternative-model failure risk plus an explicit correction for noise, sample count and horizon when $\Jint$ exceeds total tolerance.
\end{itemize}

\noindent\emph{Related work.}
SafeMDP expands a certified safe set through safely returnable states \cite{turchetta-safemdp}. Here, returning the output does not undo accumulated damage. Reachability-based safety filters preserve a robust invariant set \cite{fisac-safety}. Our bound identifies when safety under both candidates prevents restoration. Active fault diagnosis designs inputs to separate bounded-error reachable-output sets \cite{scott-afd}. We account for damage from diagnostic and restorative inputs.

Zeng et al.\ \cite{zeng2023hardness} derive sample requirements from statistically close linear systems with no common stabilizing static gain. Their gain acts indefinitely, whereas our intervention ends and recovery must persist. Saligrama \cite{saligrama-commit} bounds the performance lost through model uncertainty using observations available before an action rules out safe continuation under an alternative. Evidence produced by that action arrives too late to guide its choice. This viewpoint includes irreversible damage and finite tolerance. Our noisy analysis uses a similar comparison of observations up to the first candidate failure under a common input. For nonlinear basin transfer, we derive a lower bound on crossing exposure from the drift and damage-rate maps. This bound can rule out common safe restoration, while a proportional allocation certifies lasting recovery after withdrawal when tolerance is sufficient.

\section{Problem formulation}\label{sec:formulation}

The input $u=(v,p)\in U:=[0,\vmax]\times[0,1]$ consists of drive $v$ and allocation $p$. The signals $(y^{(i)},w_i,v^{(i)},p^{(i)})$ describe operation under $\md{i}$. Superscripts distinguish signals that can differ under feedback.

\Needspace{3\baselineskip}
\begin{assumption}[Bistable output dynamics]\label{ass:f}
The undamaged, uncontrolled system obeys $\dot y=f(y)$, where $f:\R\to\R$ is continuously differentiable ($C^1$) and satisfies the following.
\begin{enumerate}
\item[(i)] \emph{Bistability.} $f$ has exactly three zeros $x_d < s < x_h$, all hyperbolic with $f'(x_d)<0$, $f'(s)>0$, $f'(x_h)<0$. The stable equilibria $x_d$ and $x_h$ represent degraded and desired operation, and $s$ separates their basins.
\item[(ii)] \emph{Bounded operation.} $f>0$ on $(-\infty,x_d)$, $f<0$ on $(x_h,\infty)$, and there exist $y^- < x_d$, $y^+ > x_h$ with $f(y) \ge \bar\rho$ for $y \le y^-$ and $f(y) + \vmax \le 0$ for $y \ge y^+$, where $\bar\rho>0$ is fixed in Assumption~\ref{ass:models}. These bounds keep the output in $\mathcal Y := [y^-, y^+]$ until failure.
\item[(iii)] \emph{Crossable barrier.} Full drive $\vmax$ exceeds the barrier level $u_a := \max_{[x_d,s]}(-f) > 0$ at $x^\ast \in (x_d,s)$, enabling undamaged crossing of $s$.
\item[(iv)] \emph{Separation rate.} On $\mathcal Y$, the one-sided Lipschitz constant $\nu_f := \sup_{\mathcal Y} f' \ge f'(s) > 0$ bounds how fast outputs under a common input can separate.
\end{enumerate}
\end{assumption}

\Needspace{3\baselineskip}
\begin{assumption}[Damage-rate map]\label{ass:damage-rate}
Damage is generated by the drive alone and is small at low drive. A continuous map $g$ on a nonnegative interval containing zero is called a class-$\mathcal K$ function if it is strictly increasing and $g(0)=0$. The rate $\varphi:[0,\vmax]\to[0,\phimax]$ is real-analytic and of class $\mathcal K$, with Lipschitz constant $L_\varphi$, so $\varphi>0$ on $(0,\vmax]$. On a low-drive range $[0,\bar v]$, $\bar v\in(0,u_a)$, it satisfies $\varphi \le \eps_\varphi$.
\end{assumption}

\Needspace{3\baselineskip}
\begin{assumption}[Candidate models]\label{ass:models}
The candidates share input and output spaces but differ in damage routing. Candidate $\md{i}$ has state $(y,w_i)\in\R\times[0,\infty)$ and dynamics
\begin{alignat}{2}
\md1:\;&\dot y=f(y)+v-c_1h_1(w_1),\quad &\dot w_1&=p\varphi(v),\label{eq:M1}\\
\md2:\;&\dot y=f(y)+v-c_2h_2(w_2),\quad &\dot w_2&=(1{-}p)\varphi(v),\label{eq:M2}
\end{alignat}
with known $c_i>0$, $y(0)=x_d$, $w_i(0)=0$, output $y$, and failure set $\{w_i\ge\Theta_i\}$ for known $\Theta_i>0$. The damage-to-output maps $h_i$ are real-analytic and of class $\mathcal K$, with the \emph{pre-failure output bound}
\begin{equation}\label{eq:visibility}
h_i(w) \le \eps_h \qquad \text{for all } w \in [0,\Theta_i]
\end{equation}
for a common $\eps_h>0$. Damage can thus approach its tolerance while producing only a small drag, $c_ih_i(w_i)$. The drag bounds are $\rho_i := c_i\eps_h$, $\rho := \rho_1+\rho_2$, and $\bar\rho := \max_i\rho_i$. Section~\ref{sec:indist} converts them into a measured-output separation bound.
\end{assumption}

Damage is irreversible because its rate is nonnegative. All parameters are known, and only the candidate is uncertain. A guarantee over any larger model class containing this pair must hold for both.

\Needspace{3\baselineskip}
\begin{definition}[Inputs and failure]\label{def:policy}
An open-loop input is a measurable function $u:[0,\infty)\to U$. Locally Lipschitz dynamics and bounded input give a unique absolutely continuous solution up to failure. The equations hold almost everywhere, allowing discontinuous input switches. The failure time is $\tau_i:=\inf\{t:w_i(t)\ge\Theta_i\}$, with $\tau_i=\infty$ if the tolerance is never reached. Theorem~\ref{thm:risk-exchange} introduces sampled output feedback.
\end{definition}

Invariance in Assumption~\ref{ass:f}(ii) holds because the field points inward at both ends. At $y^+$, $\dot y\le f(y^+)+\vmax\le0$, while at $y^-$, $\dot y\ge\bar\rho-c_ih_i(w_i)\ge\bar\rho-\rho_i\ge0$ by \eqref{eq:visibility}.

\Needspace{3\baselineskip}
\begin{definition}[Viability and restoration]\label{def:viable}
Safety here means keeping the damage below the tolerance of the applicable candidate. A policy $\pi$ is \emph{viable} under $\md{i}$ on $[0,T]$ if $\tau_i>T$. It is \emph{robustly viable} if this holds under both $\md1$ and $\md2$. Robust viability evaluates one decision rule under both candidates. An open-loop rule produces the same input. Feedback may select different inputs because the candidates generate different observations.

A policy is \emph{restorative} under $\md{i}$ on $[0,T]$ with margin $a$ if some $T_w\le T$ satisfies $y^{(i)}(T_w)\ge s+a$ and $v^{(i)}(t)=0$ for almost every $t>T_w$. After withdrawal, fixed damage leaves a constant drag that can shift the separator upward. To bound this shift, let $m,a_0>0$ be such that the closed $a_0$-neighborhoods of $x_d,s,x_h$ are disjoint and lie in the interior of $\mathcal Y$, with $|f'|\ge m/2$ on each. For $a\in(0,a_0)$, set $\mu(a):=\min\{|f(x)|: x\in\mathcal Y,\ \operatorname{dist}(x,\{x_d,s,x_h\})\ge a\}$ and $\rho^\ast(a):=\min\{\mu(a)/2,\ ma/4\}$. Restoration uses only \emph{admissible} margins, those with $a\in(0,a_0)$ and $\bar\rho\le\rho^\ast(a)$.
\end{definition}

\Needspace{3\baselineskip}
\begin{lemma}[Persistence]\label{lem:persistence}
Under Assumption~\ref{ass:f}, for $a\in(0,a_0)$ and every constant $\kappa \in [0, \rho^\ast(a)]$, the field $f_\kappa := f - \kappa$ has exactly three zeros in $\mathcal Y$, given by $x_d(\kappa) \in [x_d - \frac{a}{2}, x_d]$, $s(\kappa) \in [s, s + \frac{a}{2}]$, $x_h(\kappa) \in [x_h - \frac{a}{2}, x_h]$, each hyperbolic with unchanged stability type. Every solution of $\dot y = f_\kappa(y)$ in $\mathcal Y$ with $y(0) \ge s + a$ converges to $x_h(\kappa)$, and with $y(0) \le s - a$ to $x_d(\kappa)$.
\end{lemma}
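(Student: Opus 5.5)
The plan is to treat $f_\kappa=f-\kappa$ as a small downward shift of $f$. We localize its zeros with two facts: $|f|$ is bounded below away from the equilibria, and $f$ is strictly monotone with slope at least $m/2$ near each equilibrium. The convergence claims then follow from one-dimensional phase-line monotonicity.

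\emph{Step 1 (no zeros far from equilibria).} Take $x\in\mathcal Y$ with $\operatorname{dist}(x,\{x_d,s,x_h\})\ge a$. Then $|f(x)|\ge\mu(a)\ge 2\kappa$, so $|f_\kappa(x)|\ge\mu(a)-\kappa\ge\mu(a)/2>0$. Moreover $f_\kappa$ has the same sign as $f$ there. Hence every zero of $f_\kappa$ in $\mathcal Y$ lies within distance $a$ of one of the three equilibria. We also need $\mu(a)>0$. This holds because $\mathcal Y$ is compact, the excluded set is closed, and $f$ has no zeros there.

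\emph{Step 2 (exactly one zero near each equilibrium, with location).} On the $a$-neighborhood of $x_d$ we have $f'\le -m/2$, using $a<a_0$ and the sign $f'(x_d)<0$. The sign of $f'$ is constant on the neighborhood because $|f'|\ge m/2$ and $f'$ is continuous. So $f_\kappa$ is strictly decreasing there, which gives at most one zero. For existence and location, note $f_\kappa(x_d)=-\kappa\le 0$ and
\[
f_\kappa(x_d-a/2)\ge \tfrac{m}{2}\cdot\tfrac a2-\kappa\ge \tfrac{ma}{4}-\rho^\ast(a)\ge 0 .
\]
The intermediate value theorem then gives $x_d(\kappa)\in[x_d-a/2,x_d]$. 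The same argument works at $x_h$, which is also decreasing and yields $[x_h-a/2,x_h]$. At $s$, where $f$ is increasing, we use $f_\kappa(s)=-\kappa\le0$ and $f_\kappa(s+a/2)\ge ma/4-\kappa\ge0$ to get $s(\kappa)\in[s,s+a/2]$. Hyperbolicity and stability type follow because $f_\kappa'=f'$ and $|f'|\ge m/2$ on each neighborhood with the original sign. The only delicate point in this step is boundary placement: the neighborhoods lie in the interior of $\mathcal Y$, so all three zeros are in $\mathcal Y$ and the count is exactly three.

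\emph{Step 3 (convergence).} Combining Steps 1 and 2, the sign of $f_\kappa$ on $\mathcal Y$ is determined. It is positive on $[y^-,x_d(\kappa))$ and on $(s(\kappa),x_h(\kappa))$, and negative on $(x_d(\kappa),s(\kappa))$ and on $(x_h(\kappa),y^+]$. The sign at the ends comes from Step 1 or from Assumption~\ref{ass:f}(ii), since $f(y^+)-\kappa\le -\vmax<0$.

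Now take $y(0)\ge s+a>s(\kappa)$ with $y(0)\in\mathcal Y$. The interval $(s(\kappa),y^+]$ is forward invariant: equilibria cannot be crossed, and at $y^+$ the field points inward. Scalar autonomous solutions are monotone, so $y(t)$ moves monotonically toward $x_h(\kappa)$ and converges to a zero of $f_\kappa$, which can only be $x_h(\kappa)$. The case $y(0)\le s-a<s(\kappa)$ is symmetric and gives convergence to $x_d(\kappa)$. For the lower end, $f(y^-)-\kappa>0$ holds by Step 1 as long as $y^-$ is at distance at least $a$ from $x_d$, which is guaranteed because the $a_0$-neighborhood of $x_d$ lies in the interior of $\mathcal Y$.

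The main obstacle is mild: confirming that $\mu(a)>0$ and that the monotonicity sign is fixed on each neighborhood. Both follow directly from the definitions of $a_0$ and $m$ in Definition~\ref{def:viable}.
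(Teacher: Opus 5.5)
Your proof is correct and follows essentially the same route as the paper: $\mu(a)\ge 2\kappa$ rules out zeros at distance at least $a$, and monotonicity with slope at least $m/2$ together with the intermediate value theorem gives exactly one hyperbolic zero near each equilibrium, after which the sign pattern yields phase-line convergence. The differences are minor: you get uniqueness from strict monotonicity on the whole $a$-neighborhood, whereas the paper excludes the band at distance $(a/2,a)$ via $|f|>ma/4\ge\kappa$, and you locate the zeros with the endpoints $\pm a/2$ rather than the paper's sharper $2\kappa/m$ (you also check $\mu(a)>0$ and the sign at $y^-$ explicitly, which the paper leaves implicit).
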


\begin{proof}
We bound the equilibrium shifts using monotonicity, then use the drift signs to establish convergence. First, $f_\kappa$ is strictly increasing near $s$. Its values satisfy $f_\kappa(s)=-\kappa\le0$ and $f_\kappa(s+2\kappa/m)\ge(m/2)(2\kappa/m)-\kappa=0$. Hence it has a unique zero $s(\kappa)\in[s,s+2\kappa/m]\subseteq[s,s+a/2]$, using $\kappa\le ma/4$. This zero is unstable because $f_\kappa'=f'\ge m/2$ there. The same argument with decreasing $f$ at $x_d$ and $x_h$ gives unique stable zeros in $[x_d-a/2,x_d]$ and $[x_h-a/2,x_h]$.

Second, these are the only zeros. At distance at least $a$ from the original equilibria, $|f_\kappa|\ge\mu(a)-\kappa>0$. At distances in $(a/2,a)$, monotonicity gives $|f|>ma/4\ge\kappa$, which also excludes a zero. Finally, the unchanged sign pattern makes trajectories move toward the stable equilibrium on their side of $s(\kappa)$. Since $s-a<s(\kappa)<s+a$, both stated initial conditions lie in the required basins.
\end{proof}

After withdrawal, $\varphi(0)=0$ keeps damage constant and fixes the residual drag $\kappa:=c_ih_i(w_i(T_w))\in[0,\bar\rho]$. A safe intervention withdrawn above $s+a$ therefore remains viable forever and converges to an equilibrium in $H=[x_h-a,x_h+a]$. Withdrawal below $s-a$ leads back to the degraded equilibrium.

\section{Exposure and safe restoration}\label{sec:exposure}

For a drive $v$, define cumulative exposure $J_t[v]:=\int_0^t\varphi(v)\dd\sigma$. Its allocated parts are $J_{1,t}:=\int_0^t p\varphi(v)\dd\sigma$ and $J_{2,t}:=\int_0^t(1-p)\varphi(v)\dd\sigma$. Since $J_{i,t}=w_i(t)$,
\begin{equation}\label{eq:exposure-decomp}
J_{1,t}+J_{2,t}=J_t[v] .
\end{equation}
Under a common input, $w_1$ and $w_2$ describe damage in the two candidate models. Only one model describes the plant, but safety under both requires each damage value to stay below its own tolerance. Allocation changes these values without changing total exposure. Basin transfer requires a minimum total exposure regardless of allocation.

\Needspace{3\baselineskip}
\begin{lemma}[Minimum exposure for basin transfer]\label{lem:min-exposure}
For $y\in[x_d,s]$, define the exposure per unit of forward progress by
\begin{equation}\label{eq:Jint}
\begin{aligned}
q(y)&:=\inf_{-f(y)<v\le\vmax}\frac{\varphi(v)}{f(y)+v},\\
\Jint&:=\int_{x_d}^{s}q(y)\dd y.
\end{aligned}
\end{equation}
Under either candidate and any policy, a trajectory from $x_d$ to $y(T_w)\ge s$ without prior failure satisfies $J_{T_w}[v]\ge\Jint$. For $\delta\in(0,u_a)$, let $[a_\delta,b_\delta]$ be the component containing $x^\ast$ of $\{-f\ge u_a-\delta\}$ in $[x_d,s]$, with length $\ell_\delta$. Then
\begin{equation}\label{eq:Jesc}
\Jint\ge\varphi(u_a-\delta)
\frac{\ell_\delta}{\vmax-u_a+\delta}
=: \Jesc(\delta).
\end{equation}
\end{lemma}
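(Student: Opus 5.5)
The plan has two parts: a pointwise bound of the exposure rate by forward progress, integrated along the trajectory, and then an explicit evaluation of $q$ on the barrier plateau.

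\emph{Part 1: the lower bound $J_{T_w}[v]\ge\Jint$.} The key observation concerns any $y\in[x_d,s]$ and any admissible drive. Since $\varphi\ge0$, $q(y)\ge0$, and the definition of $q$ gives
\[
\varphi(v)\ \ge\ q(y)\,\bigl(f(y)+v\bigr)^+ \qquad \text{for all } v\in[0,\vmax].
\]
If $f(y)+v\le 0$ the right side is zero. Otherwise $v$ is feasible in the infimum, so the inequality holds directly.

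Under either candidate, the drag is nonnegative because $h_i\ge0$ and $c_i>0$. Hence $\dot y\le f(y)+v$ almost everywhere, and so $\dot y^+\le (f(y)+v)^+$ wherever $\dot y$ exists.

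Now let $Q(y):=\int_{x_d}^{\min(\max(y,x_d),s)}q$. I will check that $q$ is bounded on $[x_d,s]$; this is the main technical point. Near the zeros $x_d$ and $s$, $f(y)\to0$, and the choice $v=\vmax$ gives $q(y)\le\phimax/(f(y)+\vmax)$. Assumption~\ref{ass:f}(iii) gives $f+\vmax\ge\vmax-u_a>0$ on $[x_d,s]$, so $q\le\phimax/(\vmax-u_a)$. For measurability of $q$, note that $q$ is an infimum of continuous functions over an open-in-$v$ set, so it is upper semicontinuous and hence Borel.

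With $q$ bounded, $Q$ is Lipschitz and nondecreasing. The chain rule for absolutely continuous functions (composition of a Lipschitz function with an AC function) then gives, almost everywhere,
\[
\tfrac{d}{dt}Q(y(t))\le q(y)\,\dot y^+\mathbf 1_{[x_d,s]}(y)\le \varphi(v).
\]
Integrating from $0$ to $T_w$, and using $Q(y(0))=0$ and $Q(y(T_w))=\Jint$, yields the bound. Reversals or pauses of the output only reduce $dQ/dt$, so they are harmlessly absorbed. No failure occurs before $T_w$, so the dynamics are defined throughout.

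\emph{Part 2: the estimate $\Jint\ge\Jesc(\delta)$.} Since $q\ge0$, it suffices to bound $q$ from below on $[a_\delta,b_\delta]$. On this interval $-f\ge u_a-\delta$, so every feasible $v$ satisfies $v>-f(y)\ge u_a-\delta$. Because $\varphi$ is increasing, $\varphi(v)\ge\varphi(u_a-\delta)$. Moreover $-f\le u_a$ on $[x_d,s]$, so $f(y)+v\le \vmax-u_a+\delta$. Together these give
\[
q(y)\ge \frac{\varphi(u_a-\delta)}{\vmax-u_a+\delta}\quad\text{on } [a_\delta,b_\delta].
\]
Integrating over an interval of length $\ell_\delta$ gives~\eqref{eq:Jesc}. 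The component containing $x^\ast$ exists and has positive length because $-f(x^\ast)=u_a>u_a-\delta$ and $f$ is continuous.

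I expect the main obstacle to be the chain-rule step: handling nonmonotone output and measurability of $q$ rigorously. The boundedness of $q$ from $\vmax>u_a$ is what resolves it.
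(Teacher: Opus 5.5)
Your proposal is correct and follows essentially the same route as the paper: extend $q$ by zero to form the potential $Q$, get the Lipschitz bound from the admissible choice $v=\vmax$ with $\vmax>u_a$, apply the chain rule along the absolutely continuous output to obtain $\tfrac{d}{dt}Q(y)\le\varphi(v)$ and integrate, then bound $q$ on the band by monotonicity. Your positive-part formulation and upper-semicontinuity argument for measurability are a slightly tidier version of the paper's case split and continuity remark. One small slip in Part 2: the bound $f(y)+v\le\vmax-u_a+\delta$ comes from the band condition $f(y)\le-(u_a-\delta)$, not from $-f\le u_a$, which only bounds $f$ from below.
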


\begin{proof}
We integrate a barrier-crossing potential. Extend $q$ by zero outside $[x_d,s]$ and set $Q(y)=\int_{x_d}^y q(r)\dd r$. The admissible choice $v=\vmax$ bounds $q$ between zero and $\varphi(\vmax)/(\vmax-u_a)$, so $Q$ is Lipschitz. In the barrier interior, the ratio diverges at zero denominator, so minimization gives continuous $q$. The chain rule applies along the absolutely continuous output. When $f(y)+v>0$, nonnegative damage gives
\begin{equation}\label{eq:potential-progress}
\frac{\dd}{\dd t}Q(y)=q(y)\dot y
\le q(y)(f(y)+v)\le\varphi(v).
\end{equation}
When $f(y)+v\le0$, the potential cannot increase. It is constant outside the barrier. Integration gives $J_{T_w}\ge Q(y(T_w))-Q(x_d)=\Jint$, including waiting and backtracking. On the stated band, positive progress requires $v>u_a-\delta$. Monotonicity bounds the numerator below by $\varphi(u_a-\delta)$ and the denominator above by $\vmax-u_a+\delta$, proving \eqref{eq:Jesc}.
\end{proof}

The integral uses the full barrier, while the simpler band estimate in Figure~\ref{fig:capacity}(a,b) explains how crossing time and damage rate combine. Bounded drive limits crossing speed. With a saturating damage rate, unbounded drive could make crossing exposure arbitrarily small. The regime is exposure-obstructed when $\Jint>\Theta_1+\Theta_2$.

\Needspace{3\baselineskip}
\begin{theorem}[Impossibility of robust restoration]\label{thm:impossibility}
If $\Jint > \Theta_1 + \Theta_2$, then no robustly viable open-loop input is restorative under \emph{either individual} candidate, for any admissible margin $a$, any horizon, and any $\eps_h>0$ satisfying the standing assumptions.
\end{theorem}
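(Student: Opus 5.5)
The plan is to argue by contradiction. Suppose an open-loop input $u=(v,p)$ is robustly viable on $[0,T]$ and restorative under some candidate $\md{i}$ with admissible margin $a$. Because the input is open-loop, both candidates receive the same $v$ and $p$. The damage signals $w_1(t)=J_{1,t}$ and $w_2(t)=J_{2,t}$ are then determined by the input alone, independent of the output trajectory. Robust viability gives $\tau_1>T$ and $\tau_2>T$. Hence $w_1(t)<\Theta_1$ and $w_2(t)<\Theta_2$ for all $t\le T$, in particular at the withdrawal time $T_w\le T$.

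Next I would invoke Lemma~\ref{lem:min-exposure} for the candidate $\md{i}$ under which restoration holds. Restoration gives $y^{(i)}(T_w)\ge s+a> s$. The trajectory starts at $y(0)=x_d$, and no failure occurs before $T_w$ since $\tau_i>T\ge T_w$. The lemma applies to any policy and under either candidate, so it yields $J_{T_w}[v]\ge\Jint$. Combining this with the exposure decomposition \eqref{eq:exposure-decomp} gives
\begin{equation*}
\Jint\le J_{T_w}[v]=J_{1,T_w}+J_{2,T_w}=w_1(T_w)+w_2(T_w)<\Theta_1+\Theta_2,
\end{equation*}
which contradicts $\Jint>\Theta_1+\Theta_2$.

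Finally I would check the uniformity claims. Neither the margin $a$ nor the horizon $T$ enters the chain of inequalities beyond $y(T_w)>s$. The quantity $\Jint$ depends only on $f$, $\varphi$ and $\vmax$, not on $\eps_h$ or the $h_i$, so the bound holds for every $\eps_h$ compatible with the standing assumptions.

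The only delicate step is confirming that Lemma~\ref{lem:min-exposure} is valid under damage drag. Its potential argument uses $\dot y\le f(y)+v$, which holds because $c_ih_i(w_i)\ge0$. It also uses the invariance of $\mathcal Y$, so that the chain rule applies before failure. I would point to the inequality \eqref{eq:potential-progress} and the invariance remark after Definition~\ref{def:policy} for these facts. I would also note the role of the open-loop hypothesis: it is what lets the damages $w_1$ and $w_2$ under both models be evaluated along the same input, even though the output trajectories may differ.
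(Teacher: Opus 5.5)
Your proposal is correct and follows the paper's proof essentially verbatim. Both argue by contradiction: Lemma~\ref{lem:min-exposure} forces exposure at least $J_{\mathrm{int}}$, while the common open-loop input makes both damage paths functions of the input alone, so \eqref{eq:exposure-decomp} caps that exposure below $\Theta_1+\Theta_2$. The only difference is that you evaluate at $T_w$, whereas the paper evaluates at $T$ and implicitly uses monotonicity of $J_t[v]$; this is harmless and, if anything, slightly cleaner.
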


\begin{proof}
The result follows by contradiction. A common input viable under both models satisfies $J_{i,T}=w_i(T)<\Theta_i$. Restoration under either model requires crossing $s$, so Lemma~\ref{lem:min-exposure} gives $J_T[v]\ge\Jint$. But \eqref{eq:exposure-decomp} gives $J_T[v]=w_1(T)+w_2(T)<\Theta_1+\Theta_2<\Jint$, a contradiction.
\end{proof}

\Needspace{3\baselineskip}
\begin{remark}[Non-complementary routing]\label{rem:routing}
The crossing argument also applies to continuous rates $\dot w_i=\phi_i(v,p)\ge0$ with $\phi_i(0,p)=0$ and $\phi_1+\phi_2\ge\phi_{\min}(v)\ge0$ for nondecreasing $\phi_{\min}$.

\Needspace{3\baselineskip}
With the same output assumptions, an input safe under both candidates and restorative under either must satisfy
\begin{equation}\label{eq:routing-exposure}
\Theta_1+\Theta_2>\int_0^{T_w}(\phi_1+\phi_2)\dd t
\ge\Jesc^{\min}(\delta),
\end{equation}
where $\Jesc^{\min}$ is \eqref{eq:Jesc} with $\phi_{\min}$ replacing $\varphi$. Thus $\Jesc^{\min}(\delta)>\Theta_1+\Theta_2$ still implies the obstruction of Theorem~\ref{thm:impossibility}.
\end{remark}

For complementary routing, choose a fixed drive on $[0,\bar T_w]$ and a comparison trajectory starting at $z(0)=x_d$, with
\begin{equation}\label{eq:Jbar}
\dot z=f(z)+v-\bar\rho,\qquad
\bar J:=J_{\bar T_w}[v].
\end{equation}

\Needspace{3\baselineskip}
\begin{proposition}[Sufficient tolerance without identification]\label{prop:capacity}
If $z$ stays in $\mathcal Y$, $z(\bar T_w)\ge s+a$, and $\bar J<\Theta_1+\Theta_2$, then this drive with $p\equiv\Theta_1/(\Theta_1+\Theta_2)$, withdrawn at $\bar T_w$, is viable and restorative under both candidates for every $T\ge\bar T_w$.
\end{proposition}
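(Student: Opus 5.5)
The argument has three parts: viability follows from proportional allocation, restoration from a comparison argument against $z$, and persistence after withdrawal from Lemma~\ref{lem:persistence}.

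\emph{Viability.} With constant $p=\Theta_1/(\Theta_1+\Theta_2)$, the damage under each candidate is a fixed fraction of the common exposure. For $t\le\bar T_w$,
\[
w_1(t)=\frac{\Theta_1}{\Theta_1+\Theta_2}J_t[v]
\le\frac{\Theta_1}{\Theta_1+\Theta_2}\bar J<\Theta_1 .
\]
The same bound $w_2(t)<\Theta_2$ holds with $1-p=\Theta_2/(\Theta_1+\Theta_2)$. These bounds use only that $J_t$ is nondecreasing in $t$ and that $\bar J<\Theta_1+\Theta_2$. For $t>\bar T_w$ the drive is zero, so $\varphi(0)=0$ freezes the damage. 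Hence $\tau_i=\infty>T$ for every $T$ under both candidates. Because the bound holds for all $t$, the damage stays below $\Theta_i$, and \eqref{eq:visibility} gives drag $c_ih_i(w_i)\le\rho_i\le\bar\rho$ throughout.

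\emph{Reaching $s+a$ by comparison.} Under $\md i$ the output obeys $\dot y=f(y)+v-c_ih_i(w_i(t))\ge f(y)+v-\bar\rho$, and $y(0)=z(0)=x_d$. I would apply a scalar comparison lemma for Carathéodory solutions to get $y^{(i)}(t)\ge z(t)$ on $[0,\bar T_w]$. Let $e=z-y$. Wherever $e>0$, one has
\[
\dot e\le f(z)-f(y)\le\nu_f\, e ,
\]
using the one-sided Lipschitz bound of Assumption~\ref{ass:f}(iv) on $\mathcal Y$ and, more simply, the local Lipschitz continuity of $f$. Grönwall applied to $e^+$ with $e^+(0)=0$ then gives $e^+\equiv0$.

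The step I expect to need care is keeping both trajectories in a region where the Lipschitz estimate applies. The hypothesis places $z$ in $\mathcal Y$, and $y$ stays in $\mathcal Y$ until failure by the invariance argument after Definition~\ref{def:policy}. Since failure never occurs here, both trajectories remain in $\mathcal Y$, where $f\in C^1$ is Lipschitz. It follows that $y^{(i)}(\bar T_w)\ge z(\bar T_w)\ge s+a$.

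\emph{Persistence.} Take $T_w=\bar T_w\le T$. The drive vanishes after $T_w$, so the restorative definition holds under both candidates. For the lasting-recovery interpretation, the residual drag is the constant $\kappa=c_ih_i(w_i(T_w))\in[0,\bar\rho]$. Since $a$ is admissible, $\bar\rho\le\rho^\ast(a)$, so $\kappa\le\rho^\ast(a)$. Lemma~\ref{lem:persistence} then shows that the post-withdrawal output converges to $x_h(\kappa)$, and viability persists forever.
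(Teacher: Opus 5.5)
Your proposal is correct and follows essentially the same route as the paper: proportional allocation keeps each $w_i$ below $\Theta_i$, scalar comparison with the drag bound $\bar\rho$ gives $y^{(i)}\ge z$ up to withdrawal, and Lemma~\ref{lem:persistence} gives restoration afterward. You also spell out steps the paper's short proof leaves implicit: monotonicity of $J_t$, damage frozen after withdrawal, the Gr\"onwall argument on $e^+$ within $\mathcal Y$, and admissibility of $a$.
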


\begin{proof}
We combine scalar comparison with proportional allocation. Before failure, the drag is at most $\bar\rho$, so each output satisfies $y^{(i)}\ge z$. Proportional allocation gives
\begin{equation}\label{eq:proportional-damage}
w_i(t)=\frac{\Theta_i}{\Theta_1+\Theta_2}J_t[v]
\le\frac{\Theta_i}{\Theta_1+\Theta_2}\bar J<\Theta_i.
\end{equation}
Thus failure cannot occur before withdrawal, when both outputs are at least $s+a$. Lemma~\ref{lem:persistence} then gives restoration.
\end{proof}

Figure~\ref{fig:capacity}(d) compares the Section~\ref{sec:instance} instance with these two certificates.

Knowing the model permits restoration even below $\Jint$. The allocation $p=0$ for $\md1$ or $p=1$ for $\md2$ keeps damage at zero.\label{prop:model-specific} A constant drive $V\in(u_a,\vmax]$, withdrawn at $T_w=(s+a-x_d)/(V-u_a)$, reaches $s+a$ because $\dot y\ge V-u_a$ up to that level. Lemma~\ref{lem:persistence} then gives restoration. The same input fails under the other candidate before withdrawal if $\varphi(V)T_w>\max_i\Theta_i$.

\section{Output separation and noisy feedback}\label{sec:indist}

Observations can reveal which allocation is safe if the candidates' outputs separate enough before failure. We first bound this separation under a common input, then analyze feedback from noisy samples.

\Needspace{3\baselineskip}
\begin{theorem}[Pre-failure output separation]\label{thm:indist}
Under a common measurable input, for $t\le\min(\tau_1,\tau_2)$,
\begin{equation}\label{eq:indist-bound}
|y^{(1)}(t)-y^{(2)}(t)|\le D_0(t)
:=\rho\frac{e^{\nu_f t}-1}{\nu_f}.
\end{equation}
\end{theorem}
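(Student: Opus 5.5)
The plan is a Gr\"onwall-type comparison on the difference $e(t):=y^{(1)}(t)-y^{(2)}(t)$. Under a common input $u=(v,p)$ the drive term $v$ cancels, so almost everywhere
\begin{equation*}
\dot e = f(y^{(1)})-f(y^{(2)}) - c_1h_1(w_1)+c_2h_2(w_2).
\end{equation*}
Both outputs start at $x_d$, so $e(0)=0$. Both stay in $\mathcal Y$ until failure by the invariance argument after Definition~\ref{def:policy}. For $t<\min(\tau_1,\tau_2)$ we have $w_i(t)<\Theta_i$, so \eqref{eq:visibility} gives $0\le c_ih_i(w_i)\le\rho_i$. The drag difference therefore has magnitude at most $\max_i\rho_i\le\rho$. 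Using $\rho$ rather than $\bar\rho$ is wasteful but matches the statement.

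The key step is a one-sided estimate on the $f$ term, obtained from Assumption~\ref{ass:f}(iv). Since $\mathcal Y$ is an interval and $f$ is $C^1$, the mean value theorem gives $f(y^{(1)})-f(y^{(2)})=f'(\xi)e$ for some $\xi\in\mathcal Y$. Hence $\operatorname{sgn}(e)\bigl(f(y^{(1)})-f(y^{(2)})\bigr)\le\nu_f|e|$.

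We then work with $|e|$, which is absolutely continuous. Where $e\neq0$ its derivative is $\operatorname{sgn}(e)\dot e$. On the set where $e=0$, the derivative of $|e|$ is zero almost everywhere wherever it exists. This gives the differential inequality
\begin{equation*}
\frac{\dd}{\dd t}|e|\le \nu_f|e|+\rho \quad \text{a.e. on } [0,\min(\tau_1,\tau_2)).
\end{equation*}
This step needs the most care. Because the input is only measurable, it must be done almost everywhere. A cleaner alternative is to apply the bound to the smooth surrogate $\sqrt{e^2+\eta^2}$ and let $\eta\to0$.

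Finally, we apply the integrating factor $e^{-\nu_f t}$. The derivative of $e^{-\nu_f t}|e|$ is at most $\rho e^{-\nu_f t}$, and integrating from $0$ with $e(0)=0$ gives $|e(t)|\le\rho(e^{\nu_f t}-1)/\nu_f=D_0(t)$. This is the comparison solution of $\dot D=\nu_f D+\rho$ with $D(0)=0$. Continuity of $e$ and $D_0$ extends the bound to $t=\min(\tau_1,\tau_2)$ when that time is finite. The same argument covers all $t$ if neither candidate fails.
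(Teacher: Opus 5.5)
Your proposal is correct and follows essentially the same route as the paper: the common drive cancels, the mean value theorem on $\mathcal Y$ bounds the $f$-difference by $\nu_f|e|$, the zero set of $e$ is handled almost everywhere, and the linear differential inequality is integrated from $e(0)=0$. Your remark that the drag difference is bounded by $\max_i\rho_i$, because both drags are nonnegative, is a valid sharpening of the paper's $\rho_1+\rho_2$, though the stated bound does not need it.
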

\Needspace{3\baselineskip}
\begin{proof}
We bound the growth of the output difference and integrate. Let $e=y^{(1)}-y^{(2)}$. Before failure both outputs lie in $\mathcal Y$. For almost every time $t$ at which $e(t)\ne0$, the mean value theorem gives
\begin{equation}\label{eq:output-differential}
\frac{\dd}{\dd t}|e|=\operatorname{sign}(e)\dot e
\le\nu_f|e|+\rho.
\end{equation}
The common drive cancels and the damage terms are bounded by $\rho_1+\rho_2$. At almost every zero of $e$, the derivative of $|e|$ is zero, so the inequality remains valid. Integration with $e(0)=0$ proves \eqref{eq:indist-bound}. Negative slopes contract the difference, so only the upper bound on $f'$ enters.
\end{proof}

\begin{figure}[t]
\centering
\includegraphics[width=\columnwidth]{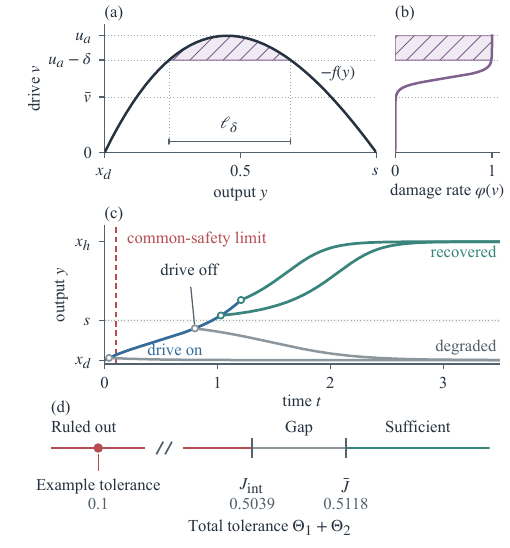}
\caption{Crossing exposure and withdrawal outcomes. (a,b) Crossing the band of width $\ell_\delta$ in Lemma~\ref{lem:min-exposure} requires drive above $u_a-\delta$, whose matching damage-rate range is shown. The shaded areas are not exposure integrals. (c) Under $\md1$ with $v=\cFanDrive$ and $p=0$, recovery depends on withdrawing above $s$. At the red line $t=\cFanCapacityTime$, total tolerance is exhausted under every allocation. Candidate $\md2$ fails earlier at $t=\cFanAltFailureTime$. (d) The Section~\ref{sec:instance} tolerance $\cThetaSum$ lies below $\Jint$, ruling out common safe restoration. Above $\bar J$, proportional allocation certifies it. Neither certificate resolves the gap, which the axis break enlarges.}
\label{fig:capacity}
\end{figure}

\noindent\emph{Observation protocol.}
The controller observes $m_k=y(t_k)+\sigma\xi_k$ at $N$ sampling times $t_k=k\,\delta t\le T$, where $\sigma>0$ and the $\xi_k$ are independent standard Gaussian variables. The controller may randomize through a seed $U$ whose law is fixed and independent of the model and noise. On $[t_k,t_{k+1})$, the input is a jointly measurable function of $(t,m_{1:k},U)$, with well-posed loops under each candidate. Here, $t_0=0$ and $t_{N+1}=T$, so before the first sample the input depends only on $(t,U)$. The controller receives no separate failure indicator, and failure is allowed.

\Needspace{6\baselineskip}
\begin{theorem}[Noise-robust risk exchange]\label{thm:risk-exchange}
Under this protocol and candidate $\md{i}$, let $\alpha_i$ be the probability of failure by $T$ and $1-\beta_i$ the probability of viable restoration with margin $a$ on $[0,T]$. If $\Jint>\Theta_1+\Theta_2$, then
\begin{equation}\label{eq:risk-exchange}
\beta_i+\alpha_{3-i}\ge1-\frac{D_0(T)\sqrt N}{2\sigma},\qquad i\in\{1,2\}.
\end{equation}
\end{theorem}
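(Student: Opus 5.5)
The plan is a coupling and total-variation argument on the observation record, combined with the deterministic obstruction of Theorem~\ref{thm:impossibility}. Fix $i$ and write $j=3-i$. On a common probability space, draw one seed $U$ and one noise sequence $\xi_{1:N}$, and run the same feedback policy in closed loop under $\md{i}$ and under $\md{j}$. The key observation is that, for each realization, whenever the sample records $m_{1:N}$ coincide, the two loops apply the same input on all of $[0,T]$. They are then governed by a single common open-loop input. By the argument of Theorem~\ref{thm:impossibility}, with \eqref{eq:exposure-decomp} and Lemma~\ref{lem:min-exposure}, that input cannot be both viable under $\md{j}$ and restorative under $\md{i}$ without failure. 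So on this event either $\md{i}$ fails to restore viably or $\md{j}$ fails by $T$. Since the records generally differ, I would instead compare laws rather than realizations.

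Define the event $E$ on the pair $(U,m_{1:N})$ as ``the policy driven by this record restores viably under $\md{i}$.'' I would make this precise through the map from records to inputs: the input on each interval is a measurable function of $(t,m_{1:k},U)$, and the resulting trajectory under $\md{i}$ is determined by it. Let $P_i$ and $P_j$ be the laws of $(U,m_{1:N})$ in the closed loop under each candidate. Then
\[
1-\beta_i=P_i(E)\le P_j(E)+\mathrm{TV}(P_i,P_j).
\]
Under $P_j$, the event $E$ together with viability under $\md{j}$ would give a single input restoring $\md{i}$ while sparing $\md{j}$. This is excluded as above, so $P_j(E)\le\alpha_j$. The work therefore reduces to bounding the total-variation term.

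The main obstacle is bounding $\mathrm{TV}(P_i,P_j)$, because the means $y^{(i)}(t_k)$ depend on past samples through feedback, and Theorem~\ref{thm:indist} only controls the gap before $\min(\tau_1,\tau_2)$. I would first try to sidestep the failure issue by stopping the comparison at the first candidate failure. The plan is to construct modified laws in which, after $\min(\tau_1,\tau_2)$, the sample means are frozen or made identical. Before that time, events of interest are unaffected by the modification. After it, either $\md{j}$ has already failed, which is counted in $\alpha_j$, or $\md{i}$ has failed, so $E$ fails. Each increment is then conditionally Gaussian with a common variance $\sigma^2$, and the two means, evaluated under the same past record, differ by at most $D_0(t_k)\le D_0(T)$ by Theorem~\ref{thm:indist}. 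The chain rule for KL divergence then gives
\[
\mathrm{KL}\le\sum_{k=1}^N\frac{D_0(T)^2}{2\sigma^2}=\frac{N D_0(T)^2}{2\sigma^2}.
\]
Pinsker's inequality yields $\mathrm{TV}\le D_0(T)\sqrt N/(2\sigma)$, which is exactly the correction in \eqref{eq:risk-exchange}.

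The delicate points I expect to check carefully are the following. First, the stopping at the first failure must be a predictable modification compatible with sampling, since failure is not observed; I would define it on the full path space, where it is measurable. Second, the events must be assigned correctly on the post-failure set, so that the loss is charged to $\alpha_j$ or to $\beta_i$ and not double counted. Rearranging the displayed inequality then gives $\beta_i+\alpha_j\ge1-D_0(T)\sqrt N/(2\sigma)$.
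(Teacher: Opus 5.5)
Your architecture is the paper's: stop the record at the first \emph{hypothetical} failure, apply the KL chain rule with Theorem~\ref{thm:indist} on each retained sample, use Pinsker, and transfer one event between the two laws. The step that does not go through as written is the transfer.

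Your inequality $1-\beta_i=P_i(E)\le P_j(E)+\mathrm{TV}(P_i,P_j)$ uses the unstopped laws, whose distance you cannot control. Once you switch to the stopped laws, the event $E$ (viable restoration under $\md{i}$) cannot be carried along. On $E$ the stop occurs at $\tau_j<T_w$. Whether $y^{(i)}(T_w)\ge s+a$, $v=0$ after $T_w$, and $\tau_i>T$ hold then depends on inputs that the policy selects from samples the stopped record masks. So $E$ is not determined by the stopped record, and your remark that events before the stop are unaffected does not cover it.

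The repair, which is what the paper does, is to pass to $B:=\{\tau_j\le\min(\tau_i,T)\}$ \emph{before} exchanging laws.
Under $P_i$, $E\subseteq B$: on $E$ we have $\tau_i>T$, and Lemma~\ref{lem:min-exposure} with \eqref{eq:exposure-decomp} gives $w_j(T_w)>\Jint-\Theta_i>\Theta_j$.
Under $P_j$, $B$ is physical failure by $T$.
Hence $1-\beta_i\le P_i(B)\le P_j(B)+\mathrm{TV}(P_i^{\mathrm{stop}},P_j^{\mathrm{stop}})\le\alpha_j+D_0(T)\sqrt N/(2\sigma)$.
Your case split on the post-stop set expresses this inclusion. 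It has to be used as a bound by $B$, not as a reassignment of $E$ after the fact.

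For $B$ to be record-measurable, and for the chain rule to give the clean sum, the stop must be predictable with respect to the \emph{observed} past. Being measurable on the full path space is not enough. The missing observation is that damage depends on the input alone, and the input on $[0,t_k)$ is a function of $(U,m_{1:k-1})$. So $\{\min(\tau_1,\tau_2,T)\ge t_k\}$ is the same measurable function of $(U,m_{1:k-1})$ under both candidates, even though failure is never observed.

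Then, given the stopped past, the $k$th entry is either $\mathcal N(g_i^k,\sigma^2)$ versus $\mathcal N(g_j^k,\sigma^2)$, with both means computed from the same past input, or the same point mass under both. This yields $\sum_k\mathbb E_i[\mathbf 1_{\{\min(\tau_1,\tau_2,T)\ge t_k\}}(g_i^k-g_j^k)^2/(2\sigma^2)]\le ND_0(T)^2/(2\sigma^2)$. The same fact shows that the retained past fixes both damage paths up to the next sample time, where the stop occurs. It therefore decides which tolerance is reached first, and hence whether $B$ holds.
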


\Needspace{3\baselineskip}
\begin{proof}
We compare the candidates along a shared observation record, then transfer a failure event. Let $P_i$ denote the law under candidate $i$, including the random seed. First, let $\mathcal F_k$ be generated by $U$ and the first $k$ samples. The input before $t_k$ depends only on $\mathcal F_{k-1}$. Since damage depends on the input alone through $w_i(t)=J_{i,t}$, this past determines both damage paths up to $t_k$. In particular,
\begin{equation}\label{eq:retention-event}
E_k:=\{\min(\tau_1,\tau_2,T)\ge t_k\}\in\mathcal F_{k-1}.
\end{equation}
Under $P_i$, $\tau_i$ is the plant's failure time. The other time, $\tau_{3-i}$, is when the same input would cause failure if $\md{3-i}$ were the plant.

Second, form the stopped record $(U,\tilde m_{1:N})$, retaining $m_k$ on $E_k$ and using an absorbing marker $\star\notin\R$ otherwise. Truncation serves only to compare probabilities and leaves the policy unchanged. Since the stopped past determines retention, conditional on that past $\tilde m_k$ has law $\mathcal N(g_i^k,\sigma^2)$ on $E_k$ and $\delta_\star$ otherwise. Here, $g_i^k$ is candidate $i$'s output at $t_k$ for the input selected by that past. On $E_k$, both hypothetical trajectories follow this same input, so Theorem~\ref{thm:indist} gives $|g_1^k-g_2^k|\le D_0(T)$, including at failure.

\Needspace{3\baselineskip}
The common seed contributes no KL divergence. The chain rule and conditional Gaussian divergences give
\begin{align}
\mathrm{KL}(P_1^{\mathrm{stop}}\|P_2^{\mathrm{stop}})
&=\sum_{k=1}^N\mathbb E_1\!\left[\mathbf1_{E_k}
\frac{(g_1^k-g_2^k)^2}{2\sigma^2}\right]\label{eq:stopped-kl-chain}\\
&\le\frac{ND_0(T)^2}{2\sigma^2}.\label{eq:stopped-kl-bound}
\end{align}
Pinsker's inequality then bounds the total variation distance of the stopped laws by $D_0(T)\sqrt N/(2\sigma)$.

\Needspace{3\baselineskip}
Finally, let $B:=\{\tau_2\le\min(\tau_1,T)\}$. The stopped record determines this event. Each retained past fixes both damage paths until the next sample or $T$, including which tolerance is reached first and any tie. Let $A_1$ be safe restoration under $\md1$ by $T$. On $A_1$, $\tau_1>T$ and Lemma~\ref{lem:min-exposure} gives $J_{T_w}[v]\ge\Jint$. Since $w_1(T_w)<\Theta_1$, the common input gives $w_2(T_w)>\Jint-\Theta_1>\Theta_2$. Thus $A_1\subseteq B$. Under $P_2$, $B$ entails physical failure, so
\begin{align}
1-\beta_1 &\le P_1(B)\le P_2(B)+\mathrm{TV},\label{eq:restoration-event}\\
P_2(B)+\mathrm{TV} &\le\alpha_2+\frac{D_0(T)\sqrt N}{2\sigma}.\label{eq:risk-transfer}
\end{align}
Swapping the model indices proves \eqref{eq:risk-exchange} for $i=2$.
\end{proof}

More samples, less noise, or a longer horizon weaken this necessary risk restriction and can make it vacuous. Retaining the KL divergence before applying Pinsker's inequality gives a tighter bound.

\Needspace{3\baselineskip}
\begin{corollary}[Tighter risk boundary]\label{cor:binary-risk}
Under the hypotheses of Theorem~\ref{thm:risk-exchange}, set $\mathcal K_N:=ND_0(T)^2/(2\sigma^2)$ and define the binary KL divergence
\begin{equation}\label{eq:binary-entropy}
d(r\|a):=r\log\frac{r}{a}+(1-r)\log\frac{1-r}{1-a},
\end{equation}
with natural logarithms and the usual limiting values at the endpoints. If $1-\beta_1>\alpha_2$, then
\begin{equation}\label{eq:binary-risk}
d(1-\beta_1\|\alpha_2)\le\mathcal K_N.
\end{equation}
The bound also holds with the model indices exchanged.
\end{corollary}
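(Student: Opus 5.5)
We reuse the machinery from the proof of Theorem~\ref{thm:risk-exchange}, but replace Pinsker's inequality with the data-processing inequality for KL divergence applied to one binary event. From \eqref{eq:stopped-kl-chain}--\eqref{eq:stopped-kl-bound} we already have $\mathrm{KL}(P_1^{\mathrm{stop}}\|P_2^{\mathrm{stop}})\le\mathcal K_N$. Recall that the stopped record $(U,\tilde m_{1:N})$ determines the event $B=\{\tau_2\le\min(\tau_1,T)\}$, since each retained past fixes both damage paths. Hence $\mathbf 1_B$ is a measurable function of the stopped record. By data processing, the Bernoulli laws of $\mathbf 1_B$ under $P_1$ and $P_2$ satisfy
\begin{equation*}
d\bigl(P_1(B)\,\|\,P_2(B)\bigr)\le\mathrm{KL}(P_1^{\mathrm{stop}}\|P_2^{\mathrm{stop}})\le\mathcal K_N .
\end{equation*}

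Next we transfer this to the stated quantities. As in the theorem's proof, $A_1\subseteq B$, so $r:=P_1(B)\ge1-\beta_1$. Under $P_2$, $B$ entails physical failure of the plant by $T$, so $q:=P_2(B)\le\alpha_2$. The hypothesis $1-\beta_1>\alpha_2$ then gives $r\ge1-\beta_1>\alpha_2\ge q$.

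Now we use monotonicity of the binary divergence. For fixed $a$, the map $r\mapsto d(r\|a)$ is nondecreasing on $[a,1]$. For fixed $r$, the map $a\mapsto d(r\|a)$ is nonincreasing on $[0,r]$. Both follow from convexity and the zero at $r=a$, or directly from the derivatives $\log\frac{r(1-a)}{a(1-r)}$ and $\frac{a-r}{a(1-a)}$. Applying them in the order $q\le\alpha_2\le1-\beta_1\le r$ gives
\begin{equation*}
d(1-\beta_1\|\alpha_2)\le d(1-\beta_1\|q)\le d(r\|q)\le\mathcal K_N .
\end{equation*}

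The main point needing care is the endpoint behaviour. When $q=0$ or $r=1$, the divergence may be $+\infty$. In that case data processing still holds in the extended sense, so the chain remains valid. The monotonicity steps should be checked with the limiting conventions, for instance $\alpha_2=0$ with $1-\beta_1>0$, where $d=+\infty$ would have to be ruled out by $\mathcal K_N<\infty$. It must also be confirmed that data processing applies to the stopped laws on the space that includes the marker $\star$; this is standard for measurable maps. Exchanging the model indices uses $B'=\{\tau_1\le\min(\tau_2,T)\}$ and the symmetric KL bound, whose proof is identical with $\mathbb E_2$ in the chain rule.
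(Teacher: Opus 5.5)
Your proposal is correct and follows essentially the same route as the paper. You apply data processing to the indicator of $B=\{\tau_2\le\min(\tau_1,T)\}$ to get $d(P_1(B)\|P_2(B))\le\mathcal K_N$, then use $P_1(B)\ge1-\beta_1>\alpha_2\ge P_2(B)$ and the two monotonicities of the binary divergence. You also note that the index-exchanged bound needs $\mathrm{KL}(P_2^{\mathrm{stop}}\|P_1^{\mathrm{stop}})\le\mathcal K_N$, obtained by the same chain-rule computation under $\mathbb E_2$, which is a detail the paper leaves implicit.
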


\begin{proof}
We reduce the stopped record to whether $B$ occurs, then use monotonicity. The data-processing inequality \cite{kaufmann2016complexity} says that this reduction cannot increase KL divergence, giving $d(r_B\|a_B)\le\mathcal K_N$, where $r_B=P_1(B)$ and $a_B=P_2(B)$. Safe restoration under candidate 1 requires $B$, while under candidate 2 this event implies failure. Hence $r_B\ge1-\beta_1>\alpha_2\ge a_B$. For $r>a$, the binary KL divergence increases with $r$ and decreases with $a$. These monotonicities give \eqref{eq:binary-risk}, including its limiting cases.
\end{proof}

For a required safe-restoration probability $r$, scalar bisection in $d(r\|a)=\mathcal K_N$ finds the minimum failure allowance $a<r$ under the alternative. Attaining this boundary requires a separate policy construction.

\Needspace{3\baselineskip}
\begin{remark}[Almost-sure safety under Gaussian observations]\label{rem:hard-safety}
Under the hypotheses of Theorem~\ref{thm:risk-exchange}, zero failure probability under both candidates rules out restoration under either, however small the positive noise level. More precisely, $\alpha_2=0$ implies $\beta_1=1$, and $\alpha_1=0$ implies $\beta_2=1$. Corollary~\ref{cor:binary-risk} gives this directly, since $d(r\|0)=\infty$ for $r>0$, whereas $\mathcal K_N$ is finite. Better sensing can relax the restriction on positive risk allowances, but cannot remove this zero-risk obstruction with finitely many Gaussian samples.
\end{remark}

\Needspace{3\baselineskip}
\begin{remark}[Exact sensing]\label{rem:gain}\label{prop:precursor}
Exact continuous sensing can identify the safe allocation at arbitrarily small exposure. A probe with $v=V_0\in(0,\vmax]$, $p=1$, and duration $t_p>0$ is safe if $\varphi(V_0)t_p<\Theta_1$. Its exact residual
\begin{equation}\label{eq:exact-residual}
R(t_p)=y(t_p)-y(0)-\int_0^{t_p}[f(y)+V_0]\dd t
\end{equation}
is negative under $\md1$ since $h_1$ is of class $\mathcal K$, and zero under $\md2$, where the probe causes no damage. The identified zero-damage allocation then stops further accumulation. Comparison with $f-\bar\rho$ keeps the probe above $x_d-a/2$.

\Needspace{3\baselineskip}
If $\vmax>u_a+\bar\rho$, an admissible constant drive above $u_a+\bar\rho$ then reaches $s+a$ in finite time, allowing restorative withdrawal. This test requires a known field and exact output integration which noisy samples do not provide.
\end{remark}

\Needspace{3\baselineskip}
\section{Certified bistable instance}\label{sec:instance}

We evaluate the certificates for $f(y)=-y(y-1)(y-3)$, with $x_d=0$, $s=1$, $x_h=3$, and $\mathcal Y=[-1,4]$. The derivative $f'$ is a concave parabola peaking at $y=4/3$, so $\nu_f=7/3$ globally, whereas $\sup_{\mathcal Y}|f'|=19$. The damage-rate map is the saturating analytic ramp $\varphi(v) = \phimax\,[\tanh(k(v-v_0)) + \tanh(kv_0)]/[\tanh(k(\vmax-v_0)) + \tanh(kv_0) ]$. The maps $h_i(w)=\eps_h\tanh(w/\Theta_i)/\tanh 1$, with $c_i=1$, are real-analytic, of class $\mathcal K$, and satisfy \eqref{eq:visibility}. Table~\ref{tab:constants} gives the design settings and resulting certificates.

The steep rise in damage rate separates low-drive operation from basin transfer. Low drive generates little exposure, while overcoming the barrier requires substantial exposure. The persistence condition also holds. Choosing $a_0=\cPersistRadius$ gives $\rho^\ast(a)\approx\cPersistBound>\bar\rho$.

\begin{table}[t]
\centering\footnotesize
\setlength{\tabcolsep}{2pt}
\caption{Parameters and exposure certificates of the instance}
\label{tab:constants}
\begin{tabular}{@{}lll@{}}
\toprule
Symbol & Meaning & Value \\
\midrule
\multicolumn{3}{@{}l}{\emph{Model and design settings}} \\
$\vmax,\ V$ & maximum, known-model drive & $2,\ 1.5$ \\
$\phimax,\ v_0,\ k$ & damage-rate parameters & $1,\ 0.4,\ 25$ \\
$\Theta_1,\ \Theta_2$ & damage tolerances & $0.05,\ 0.05$ \\
$a,\ u_a-\delta$ & withdrawal margin, band level & $0.05,\ 0.5$ \\
\midrule
\multicolumn{3}{@{}l}{\emph{Computed quantities}} \\
$u_a$ & barrier level & $0.631$ \\
$a_\delta,\ b_\delta$ & crossing-band endpoints & $0.237,\ 0.685$ \\
$\Jint$ & crossing lower bound \eqref{eq:Jint} & $0.5039$ \\
$\bar J$ & sufficient exposure \eqref{eq:Jbar} & $0.5118$ \\
$T_w,\ \bar T_w$ & known-model, common withdrawal & $1.21,\ 4.726$ \\
\bottomrule
\end{tabular}

\smallskip
\parbox{\columnwidth}{\raggedright\footnotesize Computed values use three or four significant figures, with $\eps_h=\cEpsH$.}
\end{table}

The instance has total tolerance $\Theta_1+\Theta_2=\cThetaSum$, below $\Jint\approx\cJint$, so Theorem~\ref{thm:impossibility} rules out common safe restoration. A comparison drive with levels $(\cCertDrives)$ and durations $(\cCertDurations)$ reaches $z(\bar T_w)>s+a$ at $\bar T_w=\cCertTw$. Its exposure is $\bar J\approx\cJconstruct$, only $\cCertGapPercent\%$ above the lower bound. With the same drag bound $\bar\rho$, Proposition~\ref{prop:capacity} certifies this drive and proportional allocation whenever total tolerance exceeds $\bar J$, for every $T\ge\bar T_w$, with withdrawal at $\bar T_w$.

This instance also illustrates sensing quality. With $T=\cHorizon$ and $N=\cNsamp$, the output-separation bound is $D_0(T)\approx\cDzero$. At noise standard deviation $\sigma=0.01$, the lower bound on $\beta_1+\alpha_2$ is approximately $0.955$. Reducing $\sigma$ tenfold to $0.001$ lowers the bound to $0.549$. If safe restoration under candidate 1 is required with probability $0.9$, the binary KL boundary requires allowing failure under candidate 2 with probability at least approximately $0.871$ and $0.476$, respectively.
 These sensing calculations use $T=\cHorizon$, not the longer constructive horizon. Exposure tightness does not establish tightness of the noisy risk bound. At either positive noise level, Remark~\ref{rem:hard-safety} still rules out restoration under almost-sure safety. The code linked on the first page reproduces the numerical results in this section.

\Needspace{3\baselineskip}
\section{Discussion}\label{sec:discussion}

Treatment withdrawal in CML raises the question of how to sustain disease control while limiting treatment harm. In our construction, accumulated damage shifts the separator and therefore affects recovery after withdrawal. The exposure certificates distinguish a structural obstacle to common safe restoration from failure of a particular schedule. The undecided range between $\Jint$ and $\bar J$ reflects two different calculations. The lower one omits drag and stops at $s$, whereas the construction reaches $s+a$ under the drag bound $\bar\rho$ with one fixed schedule.

\Needspace{3\baselineskip}
Learning before withdrawal can inform the choice of intervention. H\"ahnel et al.\ find that simulated responses to dose reduction help predict withdrawal outcomes \cite{haehnel-cml}. Here, the unknown candidate determines the safe allocation. With noisy observations, learning depends on how much the candidates' outputs separate before failure. In the exposure-obstructed regime, Theorem~\ref{thm:risk-exchange} bounds safe-restoration probability through failure risk under the alternative and the quality of finite noisy observations. One design to study uses the safe probe of Remark~\ref{rem:gain} with $p=1$, then collects samples with $v=0$. The candidates retain different drag without accumulating further damage. How probe exposure, noise and observation time affect safe-restoration probability under fixed horizon and failure allowance remains open.

\Needspace{3\baselineskip}
The recovery dynamics in H\"ahnel et al.\ couple leukemic and immune populations \cite{haehnel-cml}, whereas our candidates share a scalar baseline drift. With several recovery variables, paths into the recovery basin could differ in exposure cost. A lower certificate would have to account for the least costly path. Allowing uncertainty in those dynamics would also enlarge the candidate class. Under the same control specification, a common input cannot guarantee safe restoration for every member of a class containing the obstructed pair, since it would have to do so for that pair. Which routing differences matter for choosing a safe allocation is a further question.

Preparing a system for withdrawal and learning how to intervene safely are thus parts of the same control problem. The exposure and risk certificates provide a way to assess whether the pursuit of lasting recovery is compatible with the safety required along the way.

\bibliographystyle{IEEEtran}
\bibliography{IEEEabrv,refs}

\end{document}